\documentclass[12pt]{amsart}

\usepackage{amsmath, amsthm, latexsym, amssymb, amsfonts, epsf, xcolor}
\usepackage[hidelinks]{hyperref}
\usepackage{cite}
\usepackage{caption, subcaption, enumerate, color}
\usepackage{multirow,multicol}
\usepackage{mathtools}
\usepackage{bm}
\usepackage{blkarray}
\usepackage{booktabs}
\usepackage{framed}
\usepackage{setspace}
\usepackage{tikz}
\usepackage{graphicx}
\usepackage{wrapfig}
\usepackage{float}
\usepackage[dvips]{epsfig}
\usepackage{bbm}
\usepackage{url}
\usepackage{booktabs}
\usepackage{bibentry}
\usepackage[normalem]{ulem}
\usepackage{comment}
\usepackage{lingmacros}
\usepackage{cleveref}
\usepackage{extarrows}
\usepackage{multirow}
\usepackage{mathtools}
\usepackage{enumerate}
\usepackage{array}
\usepackage{bigstrut}
\usepackage{exscale,relsize}
\usepackage{graphicx}
\usepackage[utf8]{inputenc}
\usepackage[linesnumbered, ruled]{algorithm2e}

%For \abs{}
\DeclarePairedDelimiter\abs{\lvert}{\rvert}%

\newcommand{\ev}{\operatorname{ev}}

\newcommand{\F}{{\mathbb F}}

\DeclareMathOperator{\ini}{in}

\textwidth=16.00cm
\textheight=21.00cm
\topmargin=0.50cm
\oddsidemargin=0.50cm 
\evensidemargin=0.50cm
\headheight=0.5cm
\headsep=1cm
\headsep=0.5cm 
\numberwithin{equation}{section}
\hyphenation{semi-stable}
\setlength{\parskip}{3pt}

\newtheorem{theorem}{Theorem}[section]
\newtheorem{lemma}[theorem]{Lemma}

\newtheorem{corollary}[theorem]{Corollary}

\theoremstyle{definition}
\newtheorem{definition}[theorem]{Definition} 
\newtheorem{remark}[theorem]{Remark}
\newtheorem{example}[theorem]{Example}

%%%%%%%%MATH

\newcommand{\rmv}[1]{}

\DeclareMathOperator{\supp}{supp}

%%%%%%%%%%% REFERENCES to labels

\newcommand{\PP}{{\mathbb{P}}}
\newcommand{\A}{{\mathbb{A}}}
\newcommand{\X}{{\mathbb{X}}}
\newcommand{\cda}[1]{C_d^{\A^{#1}_0,S}(#1)}

\newcommand{\cdx}{C_d^{\mathbb{X},S}}
\newcommand{\cdxx}{C_{\leq d}^{\mathbb{X},S}}
\newcommand{\cdp}{C_d^{\PP,S}}
\newcommand{\fq}{\mathbb{F}_q}

\begin{document}

%%%%%%%%%%%%%%%%%%%%%%%Topmatter

\title[Cartesian square-free codes]{Cartesian square-free codes}

\author[C. Carvalho]{C\'{\i}cero Carvalho}
\address[Cicero Carvalho]{Department of Mathematics\\ University of Uberlandia\\ Uberlandia,  Brazil}
\email{cicero@ufu.br}

\author[H. L\'opez]{Hiram H. L\'opez}
\address[Hiram H. L\'opez]{Department of Mathematics\\ Virginia Tech\\ Blacksburg, VA USA}
\email{hhlopez@vt.edu}

\author[R. San-Jos\'e]{Rodrigo San-Jos\'e}
\address[Rodrigo San-Jos\'e]{Department of Mathematics\\ Virginia Tech\\ Blacksburg, VA USA. \textit{Previous address:} IMUVA-Mathematics Research Institute, Universidad de Valladolid, 47011 Valladolid (Spain)}
\email{rsanjose@vt.edu}

\thanks{C\'{\i}cero Carvalho was partially supported by Fapemig - grant APQ-01430-24, and CNPq grant
PQ 308708/2023-7. Hiram H. L\'opez was partially supported by the NSF Grants DMS-2401558 and DMS-2502705.
Rodrigo San-Jos\'e was partially supported by Grant PID2022-138906NB-C21 funded by MICIU/AEI/10.13039/501100011033 and by ERDF/EU, and FPU20/01311 funded by the Spanish Ministry of Universities.}
\keywords{Square-free codes, Cartesian codes, footprint bound, generalized Hamming weights}
\subjclass[2010]{94B05;  81P70;  11T71; 14G50}

\dedicatory{Festschrift on the occasion of the 80th anniversary of the Mexican Mathematical Society}

\begin{abstract}
The generalized Hamming weights (GHWs) of a linear code $C$ extend the concept of minimum distance, which is the minimum cardinality of the support of all one-dimensional subspaces of $C$, to the minimum cardinality of the support of all $r$-dimensional subspaces of the code. In this work, we introduce Cartesian square-free codes, which are linear codes generated by evaluating square-free monomials over a Cartesian set. We use commutative algebraic tools, specifically the footprint bound, to provide explicit formulas for some of the GHWs of this family of codes, and we show how we can translate these results to evaluation codes over the projective space.
\end{abstract}

\maketitle

\section{Introduction} \label{S:intro}
An $[n,k,d]$ linear code $C$ over a finite field with $q$ elements $\F_q$ is a $k$-dimensional subspace of $\F_q^n$ with minimum distance
$d:=\min \{ d({\bm c},{\bm c}') \ : \ {\bm c}, {\bm c}' \in C, {\bm c} \neq {\bm c}' \},$ where 
$d({\bm c},{\bm c}'):=\min \left| \{ i : c_i \neq c_i' \} \right|$ denotes the Hamming distance. Linear codes, extensively studied for more than seventy-five years since the seminal paper by Shannon in 1948~\cite{MR0026286}, were originally introduced to reliably and efficiently communicate information from a source to a receiver through a noisy channel. Throughout the years, linear codes have found a series of applications such as satellite communication, cell phones, distributed computation, and cryptography, among many others. In 1991, Wei introduced generalized Hamming weights (GHWs) for linear codes to measure the performance of a type-II wiretap channel~\cite{weiGHW}. The GHWs extend the minimum distance of a code, which is the minimum cardinality of the support of all one-dimensional subspaces of $C$, to the minimum cardinality of the support of all $r$-dimensional subspaces of $C$. Specifically, the $r^{th}$-generalized Hamming weight of a code $C \subseteq \F_q^n$ is given by
$$
d_r(C):=\min \left\{ \abs{ \supp(C') }: C' \textnormal{ is a subcode of } C \textnormal{ of dimension } r \right\},
$$
where $1\leq r \leq k$ and
$
\supp(C):=\left\{ i: \exists c \in C \textnormal{ with } c_i \neq 0 \right\}=\bigcup_{c\in C}\supp(c).
$ The weight hierarchy of $C$ is defined
to be the set of integers $\{ d_r(C): 1 \leq r \leq k\}.$ In the same work~\cite{weiGHW}, Wei computed the weight hierarchy for several families of linear codes, including Reed–Muller, Hamming, and extended Hamming, simplex (dual of a Hamming), Reed–Solomon, and all maximum-distance-separable (MDS) codes.

Since 1991, the weight hierarchy has been extensively studied in the literature. Helleseth, Kl{\o}ve, and Ytrehus focused on the GHMs of binary linear codes in~\cite{helleseth1992ghw}. They gave bounds for these weights, provided a generalized Griesmer bound, determined the weight hierarchies for $k \leq 3$, and the optimal values of $d_r$ for $r \geq k-3$. In 1992, Feng, Tzeng, and Wei provided the GHWs for certain families of cyclic codes in~\cite{feng1992cyclic}. Janwa and Lal also studied the GHWs of cyclic codes in~\cite{janwa1997cyclic}. Other examples of families of codes for which the GHWs were studied are product codes~\cite{wei1993product}, trace codes~\cite{312185}, Hermitian codes~\cite{heijnen1998rm, barbero2000hermitian}, and more generally, decreasing norm-trace codes ~\cite{Camps-Moreno2025-xo}, $q$-ary Reed--Muller codes~\cite{heijnen1998rm}, projective Reed-Muller codes~\cite{sanjoseRecursivePRM,beelenGHWPRM,boguslavsky_GHWs_PRM} and their variants~\cite{villarreal_GHWs_certain_RM,villarreal_GHWs_graphs}, {M}elas codes and dual {M}elas codes~\cite{geer1994melas}, BCH codes~\cite{geer1994bch}, convolutional codes~\cite{567750, 10102487} and matrix-product codes \cite{sanjoseGHWMPC}.

Evaluation codes form an important family of linear codes defined by evaluating functions on certain points. Some examples are Reed-Solomon codes, which are given by evaluating polynomials in one variable up to a certain degree over a subset of points $A$ of $\mathbb{F}_q$. When we move from one to several variables, and from $A$ to $\mathbb{F}_q^m$, we obtain the Reed-Muller codes. In this work, we focus on the GHWs of the evaluation codes generated by the evaluation of square-free monomials over a Cartesian set. These are called Cartesian square-free codes, which include as particular cases all the previous families of square-free codes, such as binary Reed-Muller codes, or square-free codes over the torus $(\mathbb{F}_q^*)^m$ \cite{GHWsToricSquarefree,jaramillo}.

The content of this paper is as follows. In Section~\ref{S:Preli}, we introduce notation and elementary results about GHWs. In Section~\ref{footprint bout}, we describe the footprint bound, an essential tool from commutative algebra that we use to compute the GHWs of Cartesian square-free codes. In Section~\ref{formulas}, we use the results from Section~\ref{footprint bout} to give explicit formulas that provide the weight hierarchies of Cartesian square-free codes. In Section~\ref{projective}, we show that some of the results about the GHWs can be translated to evaluation codes over the projective space. 

\section{Preliminaries}
\label{S:Preli}
Let $A_1,\dots, A_m$ be a collection of non-empty subsets of $\fq$, the finite field with $q$ elements. We denote their Cartesian product by $\X=A_1\times \cdots \times A_m$. Let $n_i=\abs{A_i}$, for $1\leq i \leq m$. Without loss of generality, in what follows, we assume that $2\leq n_1\leq n_2\leq \cdots \leq n_m$. We denote $B_\X=\{0,\dots,n_1-1\}\times \cdots \times \{0,\dots,n_m-1\}$. We consider $S(m)=\{0,1\} ^m$, and we denote by $S_d(m)$ (resp., $S_{\leq d}(m)$) the tuples $(\alpha_1,\dots,\alpha_m)\in S$ such that $\sum_{i=1}^m\alpha_i=d$ (resp., $\sum_{i=1}^m\alpha_i\leq d$). We may omit the reference to $m$ if it is obvious from context, and denote these sets by $S_d$ and $S_{\leq d}$. 
We consider the codes
$$
\cdx:=\langle \ev_\X(x^\alpha): \alpha \in S_d \rangle, \;  \cdxx:=\langle \ev_\X(x^\alpha): \alpha \in S_{\leq d} \rangle.
$$

Note that $\cdx$ is the evaluation code generated by evaluating the square-free monomials of total degree $d$ in $m$ variables over the Cartesian set $\X$. Since we only evaluate square-free monomials, these are called Cartesian square-free codes. Their length is $n=\prod_{i=1}^m n_i$. For the codes $\cdx$, we have that $\dim \cdx=\binom{m}{d}$, the number of square-free monomials of degree $d$. By adding over all possible degrees $0\leq d' \leq d$, we also obtain $\dim \cdxx=\sum_{i=0}^d \binom{m}{i}$. Note that, for $d>m$, we have $S_d=\emptyset$ and $S_{\leq d}=V_{\leq m}^S$. Thus, we only consider these codes for $1\leq d \leq m$.

We are mainly interested in the generalized Hamming weights of these Cartesian square-free codes. Given a tuple or a vector $\alpha=(\alpha_1,\dots,\alpha_n)$, define 
$$
\supp(\alpha)=\{i: \alpha_i\neq 0\}.
$$
Similarly, the support of a  code $C\subset \fq^n$ is 
$$
\supp(C):=\left\{ i: \exists c \in C \textnormal{ with } c_i \neq 0 \right\}=\bigcup_{c\in C}\supp(c).
$$

\begin{definition} \label{def:ghw}
The $r^{th}$-generalized Hamming weight of an $[n,k,d]$ code $C$ is 
$$
d_r(C):=\min \left\{ \abs{ \supp(C') }: C' \textnormal{ is a subcode of } C \textnormal{ of dimension } r \right\},
$$
where $1\leq r \leq k$.
\end{definition}
The first generalized Hamming weight $d_1(C)$ is just the minimum distance of $C$. Wei obtained the following general properties of the generalized Hamming weights in \cite{weiGHW}.

\begin{theorem}[Monotonicity]\label{t:monotonia}
For an $[n,k]$ linear code $C$ with $k>0$, we have
$$
1\leq d_1(C)<d_2(C)<\cdots <d_k(C)\leq n.
$$
\end{theorem}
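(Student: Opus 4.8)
The plan is to dispose of the two extreme inequalities first and then concentrate on the strict monotonicity, which is the heart of the statement. The bound $d_1(C)\geq 1$ is immediate: any one-dimensional subcode is spanned by a nonzero codeword $c$, and a nonzero vector has at least one nonzero coordinate, so $\abs{\supp(\langle c\rangle)}\geq 1$. At the other end, since the support of any subcode is a subset of the coordinate set $\{1,\dots,n\}$, we have $d_k(C)=\abs{\supp(C)}\leq n$, using that $C$ itself is the unique subcode of dimension $k$.

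For the chain of strict inequalities, I would prove the following key lemma: for every subcode $D\subseteq C$ of dimension $r+1$ there exists a subcode $D'\subseteq D$ of dimension $r$ with $\abs{\supp(D')}\leq \abs{\supp(D)}-1$. Granting this, one applies it to a subcode $D$ of dimension $r+1$ that attains the minimum $d_{r+1}(C)=\abs{\supp(D)}$; the resulting $D'$ has dimension $r$, so by the definition of $d_r(C)$ as a minimum over all $r$-dimensional subcodes we get $d_r(C)\leq \abs{\supp(D')}\leq d_{r+1}(C)-1< d_{r+1}(C)$, valid for each $1\leq r\leq k-1$.

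To prove the lemma, I would fix an index $i\in\supp(D)$, which exists because $\dim D=r+1\geq 1$, and consider the linear evaluation map $\ev_i\colon D\to\fq$ sending a codeword to its $i$-th coordinate. By the choice of $i$ this map is nonzero, hence surjective onto $\fq$, so its kernel $D'=\{x\in D: x_i=0\}$ has dimension $r$. Every codeword of $D'$ vanishes in coordinate $i$, so $i\notin\supp(D')$, while $D'\subseteq D$ forces $\supp(D')\subseteq\supp(D)$; together these give $\supp(D')\subseteq\supp(D)\setminus\{i\}$ and therefore $\abs{\supp(D')}\leq\abs{\supp(D)}-1$.

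The main obstacle here is not conceptual but bookkeeping: one must verify carefully that the kernel $D'$ has exactly dimension $r$ (the rank–nullity count for $\ev_i$) and that the support genuinely drops by a full unit, i.e.\ that the discarded coordinate $i$ was actually in $\supp(D)$ and is genuinely removed in $\supp(D')$. Once the inclusion $\supp(D')\subseteq\supp(D)\setminus\{i\}$ is pinned down, iterating the lemma over $r=1,\dots,k-1$ yields the full strictly increasing chain.
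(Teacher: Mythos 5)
Your proof is correct and is the standard argument for Wei's monotonicity theorem; the paper itself offers no proof, simply citing \cite{weiGHW}, and your kernel-of-the-coordinate-functional lemma is exactly the argument given there. Nothing to flag: the rank--nullity count for $\ev_i$ and the inclusion $\supp(D')\subseteq\supp(D)\setminus\{i\}$ are both handled correctly.
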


\begin{definition}
The dual of a code $C\subset \F_q^n$ is defined by
\[C^{\perp} : = \{ \bm{w}\in \F_q^n \ : \ \bm{w}\cdot\bm{c}=0 \text{ for all } \bm{c}\in C \}, \]
where $\bm{w}\cdot\bm{c}$ represents the Euclidean inner product.
\end{definition}
The following theorem relates the GHWs of a code and its dual.
\begin{theorem}[Duality]\label{ghwdual}
Let $C$ be an $[n,k]$ code. Then
$$
\{d_r(C):1\leq r\leq k\}=\{1,2,\dots,n\}\setminus \{n+1-d_r(C^\perp):1\leq r\leq n-k\}.
$$
\end{theorem}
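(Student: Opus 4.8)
The plan is to reduce the computation of each $d_r(C)$ to a question about shortened subcodes and then exploit a dimension identity linking $C$ and $C^\perp$. For a subset $I\subseteq\{1,\dots,n\}$, write $C(I):=\{c\in C:\supp(c)\subseteq I\}$ for the subcode of codewords supported inside $I$. The first step is the elementary observation that
$$
d_r(C)=\min\{\,\abs{I}: \dim C(I)\ge r\,\},
$$
since any subcode $C'\subseteq C$ of dimension $r$ has $\supp(C')=I$ for some $I$ with $C'\subseteq C(I)$, while conversely $C(I)$ contains an $r$-dimensional subcode supported in $I$. Accordingly I would introduce the nondecreasing step function $f(s):=\max_{\abs{I}=s}\dim C(I)$ for $0\le s\le n$, with $f(0)=0$ and $f(n)=k$. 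Deleting a single coordinate from an optimal $I$ shows $f(s-1)\ge f(s)-1$, so $f$ increases by steps of $0$ or $1$, and its jump points (the $s$ with $f(s)=f(s-1)+1$) are exactly the integers in $\{d_r(C):1\le r\le k\}$; monotonicity (\Cref{t:monotonia}) guarantees these are $k$ distinct values. Defining $g(s):=\max_{\abs{I}=s}\dim C^\perp(I)$ analogously, the jump points of $g$ are exactly $\{d_r(C^\perp):1\le r\le n-k\}$.

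The heart of the argument is a dimension identity obtained from the duality between projection and shortening. Let $\pi_I:\F_q^n\to\F_q^I$ be the coordinate projection onto $I$. Its kernel on $C$ is $C(I^c)$, so $\dim\pi_I(C)=k-\dim C(I^c)$. On the other hand, extending vectors of $\F_q^I$ by zeros identifies the dual of $\pi_I(C)$ inside $\F_q^I$ with $C^\perp(I)$, whence $\dim\pi_I(C)=\abs{I}-\dim C^\perp(I)$. Equating the two expressions and replacing $I$ by its complement gives, for every $I$,
$$
\dim C(I)=\dim C^\perp(I^c)+\abs{I}+k-n.
$$
Taking the maximum over all $I$ with $\abs{I}=s$ — so that $I^c$ ranges over all subsets of size $n-s$ — turns this pointwise identity into the clean relation
$$
f(s)=g(n-s)+s-(n-k),\qquad 0\le s\le n,
$$
which one checks against the boundary values $f(0)=0$ and $f(n)=k$.

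From here the theorem follows by comparing jumps. Subtracting consecutive values yields
$$
f(s)-f(s-1)=1-\bigl(g(n+1-s)-g(n-s)\bigr),
$$
so $f$ jumps at $s$ if and only if $g$ does \emph{not} jump at $n+1-s$. Translating back through the jump-point descriptions, $s\in\{d_r(C)\}$ exactly when $n+1-s\notin\{d_r(C^\perp)\}$, i.e.\ when $s\notin\{\,n+1-d_r(C^\perp):1\le r\le n-k\,\}$, which is the claimed identity. The main obstacle is establishing the pointwise identity correctly: the step that $\pi_I(C)^\perp=C^\perp(I)$ (as codes on the coordinate set $I$) must be verified carefully, and one must ensure the identity holds for \emph{each} $I$, not merely as an inequality, since only then does the passage from the per-subset relation to the relation between $f$ and $g$ via complementation go through. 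Everything else is bookkeeping with the two monotone step functions.
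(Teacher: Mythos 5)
Your proof is correct. Note, however, that the paper does not prove this theorem at all: it is quoted as a known result of Wei \cite{weiGHW}, so there is no in-paper argument to compare against. On its own merits, your argument is a complete and standard-style proof. The key identity $\dim C(I)=\dim C^{\perp}(I^{c})+\abs{I}+k-n$ is exactly the puncturing/shortening duality, and your verification that $\pi_I(C)^{\perp}\cong C^{\perp}(I)$ via zero-extension is the right way to justify it: for $w\in\F_q^I$ extended by zeros to $\tilde w\in\F_q^n$ one has $\tilde w\cdot c=w\cdot\pi_I(c)$ for all $c\in C$, so $w\perp\pi_I(C)$ iff $\tilde w\in C^{\perp}$ with support in $I$. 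The reformulation $d_r(C)=\min\{s:f(s)\ge r\}$, the fact that $f$ climbs from $0$ to $k$ in unit steps (so its $k$ jump points are precisely the weight hierarchy), and the complementation $f(s)=g(n-s)+s-(n-k)$ forcing exactly one of $f$ at $s$ and $g$ at $n+1-s$ to jump, together give the claimed partition of $\{1,\dots,n\}$. This is essentially the modern textbook proof; Wei's original argument reaches the same conclusion by a more hands-on analysis of subcode supports, whereas your rank-function bookkeeping makes the ``exactly one of the two sets contains $s$'' dichotomy transparent. No gaps.
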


\section{Footprint bound}
\label{footprint bout}
In this section, we introduce the footprint bound, which is the main tool we use to study the GHWs of square-free codes. For the tools we use from commutative algebra, we recommend the references \cite{cox,villarreal_book_monomial_algebras}. Fix a monomial order $\prec$ such that $x_1\prec x_2\prec \cdots \prec x_m$, and let $\mathbb{M}^m$ be the set of monomials of $\fq[x_1,\dots,x_m]$. For an ideal $I\subset \fq[x_1,\dots,x_m]$, we denote
$$
\Delta(I):=\left\{ M \in \mathbb{M}^m: M \not \in \ini(I) \right\}=\left\{ M \in \mathbb{M}^m: \ini(g) \nmid M \ \ \forall  g \in \mathcal G \right\},
$$ 
where $\mathcal G$ is a Grobner basis for $I$. The footprint bound states that 
$$
\abs{V(I)} \leq \abs{\Delta(I)}=\abs{\Delta(\ini(I))},
$$
with equality if $I$ is radical (see \cite[Thm. 6 and Prop. 7, Chapter 5 \textsection 3]{cox}). 
We denote by $\mathbb{M}^m_d$ (resp., $\mathbb{M}^m_{\leq d}$) the set $\{x^\alpha\in \mathbb{M}^m:\abs{\alpha}=d \} $ (resp., $\{x^\alpha\in \mathbb{M}^m:\abs{\alpha}\leq d\} $). Similarly, we denote $\mathbb{M}^S_d$ (resp., $\mathbb{M}^S_{\leq d}$) the set $\{x^\alpha\in \mathbb{M}^m:\alpha \in S_d\} $ (resp., $\{x^\alpha\in \mathbb{M}^m:\alpha \in S_{\leq d} $).
Let $\mathcal{N}_{r,d}:=\{ N=\{x^{\alpha_1},\dots,x^{\alpha_r}\}\subset\mathbb{M}_d^S: \alpha_i\neq \alpha_j \text{ for } i\neq j \}$ and $\mathcal{N}_{r,\leq d}:=\{ N=\{x^{\alpha_1},\dots,x^{\alpha_r}\}\subset\mathbb{M}_{\leq d}^S:\alpha_i\neq \alpha_j \text{ for } i\neq j \}$. In terms of codes, the footprint bound gives the following: 
\begin{equation}\label{eq:fpbound}
\begin{aligned}
d_r(\cdx)\geq n_1\cdots n_m-\max\left\{\abs{\Delta_\X(N)}, N \in \mathcal{N}_{r,d}\right\}=\min \{ \abs{\nabla_{\X}(N)}, N \in \mathcal{N}_{r,d}\}, \\
d_r(\cdxx)\geq n_1\cdots n_m-\max\left\{\abs{\Delta_\X(N)}, N \in \mathcal{N}_{r,\leq d}\right\}=\min \{ \abs{\nabla_{\X}(N)}, N \in \mathcal{N}_{r,\leq d}\},
\end{aligned}
\end{equation}
where $\Delta_\X(N):=\Delta(\ini(I(\X))+(x^\alpha\in N))$ and $\nabla_\X(N):=\Delta(I(\X))\setminus \Delta_\X(N)$. The set $\nabla_\X(N)$ is called the shadow of $N$. We show now that this bound is sharp for the GHWs of the codes $\cdx$ and $\cdxx$. First, we recall that the footprint bound is sharp for monomial decreasing Cartesian codes (in the sense of \cite{camps_polar_decreasing_cartesian,tillich_polar_decreasing}). This follows from the proof of \cite[Thm. 5.1]{eduardoGHWHyperbolic}, which works for any decreasing Cartesian code. This leads to the following result.

\begin{theorem}\label{t:fpsharp_leq_d}
Let $\X=A_1\times \cdots \times A_m$ be a Cartesian set, let $0\leq d \leq m$, let $1\leq r \leq \binom{m}{d}$, and let $M=\{x^{\alpha_1},\dots,x^{\alpha_r}\}\subset \mathbb{M}_{\leq d}^S$. Then there exists a set $F=\{f_1,\dots,f_r\}\subset \mathcal{L}(\mathbb{M}_{\leq d}^S)$ such that $\ini(f_i)=x^{\alpha_i}$, $1\leq i \leq r$, and 
$$
\abs{V_{\X}(F)}=\abs{\Delta_{\X}(M)}.
$$
As a consequence,
$$
d_r(\cdxx)=n_1\cdots n_m-\max\left\{\abs{\Delta_\X(N)}, N \in \mathcal{N}_{r,\leq d}\right\}=\min \{ \abs{\nabla_{\X}(N)}, N \in \mathcal{N}_{r,\leq d}\}.
$$
\end{theorem}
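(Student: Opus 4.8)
The plan is to establish the realizability statement by an explicit product‑of‑linear‑forms construction, and then to read off the sharpness of the footprint bound for $d_r(\cdxx)$ from it. Recall first that $I(\X)$ is generated by the univariate polynomials $g_j=\prod_{a\in A_j}(x_j-a)$, $1\le j\le m$, which form a Gröbner basis with $\ini(g_j)=x_j^{n_j}$; hence $\ini(I(\X))=(x_1^{n_1},\dots,x_m^{n_m})$ and $\Delta(I(\X))=\{x^\beta:\beta\in B_\X\}$. Writing $T_i=\supp(\alpha_i)$, one then checks combinatorially that $\Delta_\X(M)=\{x^\beta\in B_\X:\ x^{\alpha_i}\nmid x^\beta \text{ for all } i\}$, so that the shadow is $\nabla_\X(M)=\bigcup_{i=1}^r\nabla_\X(\{x^{\alpha_i}\})$ with $\nabla_\X(\{x^{\alpha_i}\})=\{x^\beta\in B_\X:\ \beta_j\ge 1 \text{ for all } j\in T_i\}$. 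Since any divisor of a square-free monomial of degree at most $d$ is again square-free of degree at most $d$, the set $\mathbb{M}_{\le d}^S$ is decreasing and $\cdxx$ is a decreasing Cartesian code, so the construction behind \cite[Thm.~5.1]{eduardoGHWHyperbolic} applies; I will carry it out directly in the square-free setting.

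For the construction, fix once and for all an element $a_j\in A_j$ for each $j$, and set
$$f_i=\prod_{j\in T_i}(x_j-a_j),\qquad 1\le i\le r.$$
Since the leading term of each factor $x_j-a_j$ is $x_j$, we get $\ini(f_i)=\prod_{j\in T_i}x_j=x^{\alpha_i}$; expanding the product shows that $f_i$ is an $\fq$-linear combination of square-free monomials of degree at most $\abs{T_i}\le d$, so $f_i\in\cL(\mathbb{M}_{\le d}^S)$ and $F=\{f_1,\dots,f_r\}$ is admissible.

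The key step is to count $V_\X(F)$. Fix for each $j$ an enumeration of $A_j$ assigning the index $0$ to $a_j$, and identify $\X$ with $B_\X$ through the bijection $\psi$ sending a point to its vector of coordinate indices. Because $f_i(P)\ne 0$ iff $P_j\ne a_j$ for every $j\in T_i$, and $P_j\ne a_j$ corresponds to $\beta_j\ge 1$, one verifies that $\psi$ maps $\supp(\ev_\X(f_i))$ exactly onto $\nabla_\X(\{x^{\alpha_i}\})$. As $\supp(C')=\bigcup_{i}\supp(\ev_\X(f_i))$ for the subcode $C'=\la \ev_\X(f_1),\dots,\ev_\X(f_r)\ra$, and $\psi$ is a bijection, we obtain $\psi(\supp(C'))=\bigcup_i\nabla_\X(\{x^{\alpha_i}\})=\nabla_\X(M)$. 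Hence $\abs{V_\X(F)}=n-\abs{\nabla_\X(M)}=\abs{\Delta_\X(M)}$, which is the asserted realizability. The point requiring care is that a \emph{single} family $F$ must realize the whole union $\nabla_\X(M)$ at once; this is exactly what the common choice of the $a_j$ across all the $f_i$ guarantees. For a general decreasing set one would instead delete nested initial segments $\{a_{j,1},\dots,a_{j,t}\}$ per coordinate, but in the square-free case only the single value $a_j$ is ever removed, so no compatibility issue arises — this is where the square-free hypothesis simplifies the argument of \cite{eduardoGHWHyperbolic}.

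Finally, for the consequence, inequality~\eqref{eq:fpbound} already gives $d_r(\cdxx)\ge n-\max\{\abs{\Delta_\X(N)}:N\in\cN_{r,\le d}\}$. For the reverse inequality, choose $M\in\cN_{r,\le d}$ attaining this maximum and take the corresponding $F$. The evaluations $\ev_\X(f_i)$ are linearly independent: their leading monomials $x^{\alpha_i}$ are distinct elements of $\Delta(I(\X))=\{x^\beta:\beta\in B_\X\}$ (each is square-free, with exponents at most $1\le n_j-1$), and each $f_i$ is already reduced modulo the Gröbner basis of $I(\X)$, so the standard triangularity argument applies. Thus $C'$ is an $r$-dimensional subcode of $\cdxx$ with $\abs{\supp(C')}=n-\abs{V_\X(F)}=n-\abs{\Delta_\X(M)}=n-\max\{\abs{\Delta_\X(N)}:N\in\cN_{r,\le d}\}$, giving $d_r(\cdxx)$ at most this value and hence equality. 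The only genuine obstacle is the simultaneous tightness of the footprint bound for all $r$ generators at once, which the uniform product construction resolves.
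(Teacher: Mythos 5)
Your proof is correct. Note, however, that the paper does not actually prove Theorem~\ref{t:fpsharp_leq_d} directly: it observes that $\mathbb{M}_{\leq d}^S$ is a decreasing set of monomials, so $\cdxx$ is a decreasing Cartesian code, and then invokes the proof of \cite[Thm.~5.1]{eduardoGHWHyperbolic}, which establishes sharpness of the footprint bound for all decreasing Cartesian codes. What you have done is carry out that construction explicitly in the square-free case, and your specialization is clean: taking $f_i=\prod_{j\in\supp(\alpha_i)}(x_j-a_j)$ with a \emph{common} choice of $a_j\in A_j$ gives $\ini(f_i)=x^{\alpha_i}$, keeps $f_i$ inside $\mathcal{L}(\mathbb{M}_{\leq d}^S)$ because all monomials in the expansion are square-free of degree at most $d$, and makes the bijection $\X\to B_\X$ (sending $a_j$ to index $0$) carry $V_\X(F)$ exactly onto $\Delta_\X(M)$, since $f_i(P)=0$ iff $P_j=a_j$ for some $j\in\supp(\alpha_i)$ iff $x^{\alpha_i}\nmid\psi(P)$. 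The linear-independence and support count are handled correctly, and combining with the footprint bound \eqref{eq:fpbound} gives the stated equality. Your argument is essentially the non-homogeneous analogue of the paper's own proof of Theorem~\ref{t:fpsharp_homogeneous_d}: there the authors must assume $0\in A_i$ because $F$ must consist of homogeneous elements of $\mathcal{L}(\mathbb{M}_d^S)$ (forcing $f_i=x^{\alpha_i}$ up to scalars), whereas in the $\leq d$ setting your freedom to include lower-degree terms removes any hypothesis on the $A_j$ — which is precisely why Theorem~\ref{t:fpsharp_leq_d} holds for arbitrary Cartesian sets. So your route buys a self-contained proof where the paper relies on an external citation, at the cost of not exhibiting the general decreasing-code statement.
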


The previous theorem shows that we can compute the footprint bound instead of directly computing the GHWs and obtain the same result. This gives a significant computational advantage and transforms the problem of computing the GHWs into a combinatorial problem in which we have to find a set of $r$ monomials that minimizes the cardinality of the shadow. We can obtain similar results for the homogeneous case under additional hypotheses.

\begin{theorem}\label{t:fpsharp_homogeneous_d}
Let $\X=A_1\times \cdots \times A_m$ be a Cartesian set such that $0\in A_i$, $1\leq i \leq m$. Let $1\leq d \leq m$, let $1\leq r \leq \binom{m}{d}$, and let $M=\{x^{\alpha_1},\dots,x^{\alpha_r}\}\subset \mathbb{M}_d^S$. Then there exists a set $F=\{f_1,\dots,f_r\}\subset \mathcal{L}(\mathbb{M}_d^S)$ such that $\ini(f_i)=x^{\alpha_i}$, $1\leq i \leq r$, and 
$$
\abs{V_{\X}(F)}=\abs{\Delta_{\X}(M)}.
$$
As a consequence,
$$
d_r(\cdx)=n_1\cdots n_m-\max\left\{\abs{\Delta_\X(N)}, N \in \mathcal{N}_{r,d}\right\}=\min \{ \abs{\nabla_{\X}(N)}, N \in \mathcal{N}_{r,d}\}.
$$
\end{theorem}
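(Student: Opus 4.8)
The plan is to avoid any homogenization machinery and simply take the monomials themselves: set $f_i := x^{\alpha_i}$ for $1\le i\le r$. Each $f_i$ lies in $\mathbb{M}_d^S\subset\mathcal{L}(\mathbb{M}_d^S)$ and satisfies $\ini(f_i)=x^{\alpha_i}$ trivially, so the whole content of the first assertion collapses to the cardinality identity $\abs{V_{\X}(F)}=\abs{\Delta_{\X}(M)}$. This is exactly where the extra hypothesis $0\in A_i$ is used, and where the homogeneous case becomes easier than the setting of Theorem~\ref{t:fpsharp_leq_d}: over a general Cartesian set a pure monomial $x^{\alpha_i}$ need not vanish anywhere on $\X$, whereas $0\in A_i$ forces each coordinate hyperplane $\{x_j=0\}$ to meet $\X$.

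First I would describe both sides combinatorially. Write $T_i:=\supp(\alpha_i)$, so $\abs{T_i}=d$ and $x^{\alpha_i}=\prod_{j\in T_i}x_j$. Since $I(\X)$ is generated by the univariate polynomials $\prod_{a\in A_j}(x_j-a)$, whose leading terms are the monomials $x_j^{n_j}$ in distinct variables (hence a Gr\"obner basis), we have $\ini(I(\X))=(x_1^{n_1},\dots,x_m^{n_m})$. Because $x^{\alpha_i}\mid x^\beta$ precisely when $\beta_j\ge 1$ for all $j\in T_i$, this gives
$$
\Delta_{\X}(M)=\{\,x^\beta : 0\le \beta_j\le n_j-1 \text{ for all } j,\ \text{and for every } i \text{ there is } j\in T_i \text{ with } \beta_j=0\,\}.
$$
On the other hand $x^{\alpha_i}(P)=\prod_{j\in T_i}p_j$ vanishes iff some $p_j$ with $j\in T_i$ is zero, so
$$
V_{\X}(F)=\{\,P\in\X : \text{for every } i \text{ there is } j\in T_i \text{ with } p_j=0\,\}.
$$
Both sets depend only on the pattern of zero coordinates. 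I would then fix, for each $j$, a bijection $\{0,1,\dots,n_j-1\}\to A_j$ sending $0\mapsto 0$ (possible since $0\in A_j$ and $\abs{A_j}=n_j$); the product over $j$ is a bijection from the box $\prod_j\{0,\dots,n_j-1\}$ onto $\X$ that preserves, coordinate by coordinate, whether an entry is zero. This bijection carries $\Delta_{\X}(M)$ onto $V_{\X}(F)$, yielding $\abs{V_{\X}(F)}=\abs{\Delta_{\X}(M)}$.

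Finally, for the stated consequence I would combine this with the footprint bound. The $x^{\alpha_i}$ are distinct standard monomials of $I(\X)$ (their exponents lie in $\{0,1\}\subseteq\{0,\dots,n_j-1\}$ as $n_j\ge 2$), so the evaluations $\ev_\X(x^{\alpha_i})$ are linearly independent and $C':=\langle\ev_\X(f_1),\dots,\ev_\X(f_r)\rangle$ is an $r$-dimensional subcode of $\cdx$ with $\supp(C')=\X\setminus V_{\X}(F)$; hence $\abs{\supp(C')}=n_1\cdots n_m-\abs{\Delta_{\X}(M)}$. Choosing $M=N$ to be the element of $\mathcal{N}_{r,d}$ that maximizes $\abs{\Delta_{\X}(N)}$ shows $d_r(\cdx)\le n_1\cdots n_m-\max\{\abs{\Delta_{\X}(N)}:N\in\mathcal{N}_{r,d}\}$, while \eqref{eq:fpbound} gives the reverse inequality, so the two sides coincide. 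The only real obstacle is conceptual rather than computational: recognizing that under $0\in A_i$ the optimal configuration is already realized by the bare monomials, so that the zero-pattern bijection of the middle step is all that is needed.
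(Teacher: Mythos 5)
Your proposal is correct and follows essentially the same route as the paper: the authors also take $f_i = x^{\alpha_i}$ and establish $\abs{V_{\X}(M)}=\abs{\Delta_{\X}(M)}$ via the zero-pattern-preserving bijection between $\X$ and the exponent box $\Delta(I(\X))$ determined by labelling $0\in A_i$ with exponent $0$. Your explicit write-up of the ``as a consequence'' step (linear independence of the standard monomials and $\supp(C')=\X\setminus V_{\X}(F)$, combined with the footprint bound) is only sketched in the paper but is the intended argument.
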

\begin{proof}
Let $A_i=\{a_{i0}=0,a_{i1},\dots,a_{i(n_i-1)}\}$, for $1\leq i \leq m$. We consider the bijection
$$
\begin{array}{lccc}
{\rm \varphi}\colon &\X &\rightarrow& \Delta(I(\X)),\\
&(a_{1,i_1},\dots,a_{m,i_m}) & \mapsto& x_1^{i_1}\cdots x_m^{i_m}.
\end{array}
$$
We will prove that $\varphi$ restricts to a biyection from $V_{\X}(M)$ to $\Delta_{\X}(M)$, which gives the result. Let $P=(a_{1,i_1},\dots,a_{m,i_m})\in V_{\X}(M)$. If $x^{\alpha_\ell}\mid \varphi(P)=x_1^{i_1}\cdots x_m^{i_m}$ for some $1\leq \ell \leq r$, then $i_j\geq 1$ for all $j\in \supp(\alpha_\ell)$. This implies that $\supp(\alpha_\ell)\subset \supp(P)$, and $x^{\alpha_\ell}(P)=\prod_{j\in \supp(\alpha_\ell)}a_{j,i_j}\neq 0$, a contradiction with the fact that $P\in V_{\X}(M)$. 

Now let $x^\beta\in \Delta_{\X}(M)$. Let $1\leq i \leq r$. There is $1\leq j \leq m$ such that $x_j\mid x^{\alpha_i}$ but $x_j\nmid x^\beta$. This entails that $\varphi^{-1}(x^\beta)$ has a zero in position $j$, and thus $x^{\alpha_i}(\varphi^{-1}(x^\beta))=0$. Since we arrive at this conclusion for any $1\leq i \leq r$, we obtain $\varphi^{-1}(x^\beta)\in V_{\X}(M)$.
\end{proof}

\begin{remark}
The proof of Theorem \ref{t:fpsharp_homogeneous_d} works for any set of square-free monomials, evaluating over any set $X$ with the property that if $(P_1,\dots, P_m)\in X$, then we also have $(P_1,\dots, P_{i-1},0, P_{i+1},\dots, P_m)\in X$, for any $1\leq i \leq m$. 
\end{remark}

\section{Formulas for the GHWs}
\label{formulas}
While the previous section provides an efficient method to compute the  GHWs of square-free codes using the footprint bound, obtaining explicit formulas for some cases is still interesting, which is what we provide in this section. In what follows, we identify monomials and tuples via the map $x_1^{\alpha_1}\cdots x_m^{\alpha_m}\mapsto (\alpha_1,\dots,\alpha_m)$. Thus, we may say that a tuple divides another tuple, meaning that the corresponding monomial of the former divides the corresponding monomial of the latter. Similarly, we may interchangeably compute the shadow of a set of tuples or a set of monomials. We assume we are using degree lexicographic ordering when considering an ordering on the tuples. This means that, given $\alpha=(\alpha_1,\dots,\alpha_m)$, $\beta=(\beta_1,\dots,\beta_m)$, we have
$$
\begin{aligned}
\alpha >\beta \iff & \abs{\alpha}>\abs{\beta}\text{ or } \abs{\alpha}=\abs{\beta} \text{ and }\alpha_i>\beta_i \text{ for the first index $i$ such that }\alpha_i\neq\beta_i.
\end{aligned}
$$
For the following result, we consider the notation $[m]:=\{1,\dots,m\}$. 

\begin{lemma}\label{l:bound_for_fp}
Let $1\leq d \leq m$ and $1\leq r \leq \binom{m}{d}$ such that $1\leq r \leq m+1-d$. Assume that
\begin{equation}\label{eq:condition_sizes_cartesian}
\frac{n_{d+r-1}n_{d-1}}{n_in_j}\leq \frac{n_{d+r-1}+n_{d-1}-1}{n_i+n_j-1},
\end{equation}
for every $d\leq i<j\leq d+r-2$. Let $M\subset \mathbb{M}_d^S$ with $\abs{M}=r$. Then 
$$
\abs{\nabla_\X(M)}\geq \prod_{i=1}^{d-1}(n_i-1)\prod_{i=d+r}^mn_i\left(\prod_{i=d}^{d+r-1}n_i-1\right).
$$
\end{lemma}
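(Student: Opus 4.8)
The plan is to recast $\abs{\nabla_\X(M)}$ as a purely combinatorial weighted count and then to identify the minimizing configuration. Writing $T_\ell=\supp(\alpha_\ell)$ for the support of each $\alpha_\ell\in M$ (a $d$-subset of $[m]$), one checks directly from the definitions that a standard monomial $x^\beta\in\Delta(I(\X))$, i.e. with $0\le\beta_i\le n_i-1$, lies in $\nabla_\X(M)$ exactly when $\supp(\beta)\supseteq T_\ell$ for some $\ell$. Grouping these monomials by their support $U=\supp(\beta)$ gives
\[
\abs{\nabla_\X(M)}=\sum_{\substack{U\subseteq[m]\\ U\supseteq T_\ell\text{ for some }\ell}}\ \prod_{i\in U}(n_i-1),
\]
i.e. the weighted size of the up-closure of $\{T_1,\dots,T_r\}$. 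A short computation (expanding the product over the three index blocks $\{1,\dots,d-1\}$, $\{d,\dots,d+r-1\}$, $\{d+r,\dots,m\}$) shows that the right-hand side of the claimed inequality is exactly this quantity for the \emph{fan} family $F=\{x_1\cdots x_{d-1}x_d,\ x_1\cdots x_{d-1}x_{d+1},\ \dots,\ x_1\cdots x_{d-1}x_{d+r-1}\}$, whose supports share the common core $\{1,\dots,d-1\}$ and have distinct tails in $\{d,\dots,d+r-1\}$. Thus the lemma is the assertion that, under hypothesis \eqref{eq:condition_sizes_cartesian}, the fan minimizes the weighted up-closure among all families of $r$ distinct $d$-subsets of $[m]$.

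To prove minimality I would argue by compression. First, for $i<j$ (so $n_i\le n_j$) the left-shift that replaces the index $j$ by $i$ in every set of the family containing $j$ but not $i$ (whenever the resulting set is new) does not increase the weighted up-closure; this follows from a standard pairing of subsets $U$ differing only in the membership of $i$ and $j$, using $n_i-1\le n_j-1$. Iterating these shifts reduces any family to a left-compressed one, supported on the smallest possible indices. The second, decisive step is a concentration move: given two sets of a compressed family that do not both contain the common core $\{1,\dots,d-1\}$, one replaces them by two sets sharing more of that core and carrying their remaining indices as far down as possible. The resulting change in $\abs{\nabla_\X(\cdot)}$ factors, after cancellation, through differences of the form $\rho_{d-1}\rho_{d+r-1}-\rho_i\rho_j$ with $\rho_t:=1-1/n_t$ and $d\le i<j\le d+r-2$; hypothesis \eqref{eq:condition_sizes_cartesian} is precisely the statement that $\rho_{d-1}\rho_{d+r-1}\le\rho_i\rho_j$, so each such difference is $\le 0$ and every concentration move leaves the shadow unchanged or decreases it. Once all $r$ sets share the core $\{1,\dots,d-1\}$, the family is determined by its tail set $E\subseteq\{d,\dots,m\}$, and $\abs{\nabla_\X(\cdot)}=\prod_{i=1}^{d-1}(n_i-1)\,\big(\prod_{e\in E}n_e-1\big)\prod_{i\notin\{1,\dots,d-1\}\cup E}n_i$ is minimized, using $n_d\le\cdots\le n_m$, exactly by $E=\{d,\dots,d+r-1\}$, giving the fan.

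The main obstacle is the concentration step: one must show that an arbitrary left-compressed family can be driven to the common-core fan by such moves without ever increasing the shadow, and that the sign of each move is governed exactly by \eqref{eq:condition_sizes_cartesian}. This requires a careful case analysis of how the two sets being modified overlap each other and the core, tracking the inclusion--exclusion contributions of their pairwise and higher unions and verifying that after all cancellations only the extremal pair $(n_{d-1},n_{d+r-1})$ versus the interior pairs $(n_i,n_j)$ survives---which is why the hypothesis is indexed precisely by $d\le i<j\le d+r-2$. Two bookkeeping points run throughout: the constraint $r\le m+1-d$ guarantees $d+r-1\le m$, so the fan's indices are valid; and one must keep the $r$ monomials distinct after each shift or concentration (with the convention $n_0:=1$, which also makes \eqref{eq:condition_sizes_cartesian} automatic in the boundary case $d=1$, where the fan is just the $r$ lightest variables and the claim is immediate). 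An alternative route that sidesteps classifying all compressed families is induction on $m$ via the splitting $\abs{\nabla_\X(M)}=\abs{\nabla_{\X'}(\mathcal A)}+(n_m-1)\abs{\nabla_{\X'}(\mathcal B)}$ obtained by conditioning on the last coordinate, where $\X'=A_1\times\cdots\times A_{m-1}$, $\mathcal A=\{T_\ell:m\notin T_\ell\}$ and $\mathcal B=\{T_\ell\setminus\{m\}\}$; there the difficulty migrates into controlling the two families, of possibly different sizes, simultaneously.
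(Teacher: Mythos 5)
Your reformulation of $\abs{\nabla_\X(M)}$ as the weighted up-closure $\sum_{U\supseteq T_\ell \text{ for some } \ell}\prod_{i\in U}(n_i-1)$ is correct, as is your verification that the claimed bound is attained by the fan $\{x_1\cdots x_{d-1}x_{d+i-1}\}_{i=1}^{r}$ and your observation that \eqref{eq:condition_sizes_cartesian} is equivalent to $\rho_{d-1}\rho_{d+r-1}\le \rho_i\rho_j$ with $\rho_t=1-1/n_t$. But the argument has a genuine gap at exactly the point you flag yourself: the ``concentration move.'' You assert that any left-compressed family can be driven to a common-core family by moves whose effect on the shadow ``factors, after cancellation, through differences of the form $\rho_{d-1}\rho_{d+r-1}-\rho_i\rho_j$,'' but you neither define these moves precisely nor carry out the cancellation; for $r\ge 3$ the inclusion--exclusion terms involve unions of three or more supports, and it is not evident that the sign of each move is controlled by pairwise quantities alone, nor that the process terminates at the fan while keeping the $r$ sets distinct. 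Since reducing an arbitrary family to the fan \emph{is} the content of the lemma, deferring this step leaves the statement unproved. (The left-shift step also deserves a written pairing argument at the level of families rather than single sets, though that part is believable.)

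For comparison, the paper avoids classifying extremal families altogether: it inducts on $r$, splitting $\abs{\nabla_\X(M)}=\abs{\nabla_\X(a_1,\dots,a_{r-1})}+\abs{\nabla_\X(a_r)\setminus\nabla_\X(a_1,\dots,a_{r-1})}$ with $a_1>\cdots>a_r$, bounds the first term by the induction hypothesis, and bounds the second from below by $\prod_{i\in\supp(a_r)}(n_i-1)\prod_{i\in[m]\setminus(\supp(a_r)\cup J)}n_i$, where $J$ collects one index from each $\supp(a_i)\setminus\supp(a_r)$. The hypothesis \eqref{eq:condition_sizes_cartesian} then enters only in a one-pair-at-a-time index-shifting estimate (via $(n_t-1)n_{t-1}\ge(n_{t-1}-1)n_t$ together with the displayed condition) showing that this second term is at least $\prod_{i=1}^{d-1}(n_i-1)\prod_{i=d+r}^{m}n_i\,(n_{d+r-1}-1)$, after which the two bounds sum to the claimed total. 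If you want to salvage your global extremal approach, you would need to prove the concentration step in full, tracking all higher-order inclusion--exclusion contributions; otherwise a term-by-term inductive argument of the paper's type is the shorter route.
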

\begin{proof}
We argue by induction on $r$. It is straightforward to check that the result is true for $r=1$. Now we assume it is true for $r-1$. Consider $M=\{a_1,\dots,a_r\}$, with $a_1>a_2>\cdots >a_r$. Then
$$
\abs{\nabla_\X(M)}=\abs{\nabla_\X(a_1,\dots,a_{r-1})}+\abs{\nabla_\X(a_r)\setminus \nabla_\X(a_1,\dots,a_{r-1})}.
$$
By the induction hypothesis, we have
\begin{equation}\label{eq:induction_hypothesis}
\abs{\nabla_\X(a_1,\dots,a_{r-1})}\geq \prod_{i=1}^{d-1}(n_i-1)\prod_{i=d+r-1}^mn_i\left(\prod_{i=d}^{d+r-2}n_i-1\right).
\end{equation}
For each $1\leq i \leq r-1$, we choose $j_i\in \supp(a_i)\setminus \supp(a_r)$, and consider $J=\{j_1,\dots,j_{r-1}\}$ (note that we may have $j_i=j_\ell$ with $i\neq \ell$, so $\abs{J}\leq r-1$). If we consider tuples whose entries corresponding to $J$ are 0, those tuples belong to $B_{\X}\setminus \nabla_\X(a_1,\dots,a_{r-1})$. Then
\begin{equation}\label{eq:bound_nabla}
\abs{\nabla_\X(a_r)\setminus \nabla_\X(a_1,\dots,a_{r-1})} \geq \prod_{i\in \supp(a_r)}(n_i-1)\prod_{i\in [m]\setminus (\supp(a_r) \cup J)}n_i.
\end{equation}
We have that $\abs{[m]\setminus (\supp(a_r) \cup J)}\geq m-d-r+1$. Let $a_r$ be fixed, consider $\ell_1<\cdots <\ell_{r-1}$, the first $r-1$ elements in $[m]\setminus \supp(a_r)$ (we can do this since $r-1\leq m-d$),
and let $u=\max\{i:i\in \supp(a_r)\}$. Then we consider the following elements (recall that we treat monomials and tuples interchangeably)
$$
a'_i=a_r\frac{x_{\ell_i}}{x_{u}}, \; 1\leq i \leq r-1.
$$
We claim that these elements minimize (\ref{eq:bound_nabla}), for this fixed $a_r$. This is equivalent to minimizing
$$
\prod_{i\in [m]\setminus (\supp(a_r) \cup J)}n_i.
$$
Indeed, in this case we can choose $J=\{\ell_1,\dots,\ell_{r-1}\}$. Since $J\cap \supp(a_r)=\emptyset$, this minimizes the number of indices considered in this product. Moreover, this also minimizes the product itself, since $n_1\leq \cdots \leq n_m$, and by construction, we are considering $\ell_1,\cdots,\ell_{r-1}$ the smallest possible indices. Therefore, we have obtained
\begin{equation}\label{eq:bound_nabla_2}
\abs{\nabla_\X(a_r)\setminus \nabla_\X(a_1,\dots,a_{r-1})} \geq \prod_{i\in \supp(a_r)}(n_i-1)\prod_{i\in [m]\setminus (\supp(a_r) \cup \{\ell_1,\dots,\ell_{r-1}\})}n_i.
\end{equation}
Note that, since $a'_1,\dots,a'_{r-1}$ minimize the bound, then the bound obtained for $a'_1,\dots,a'_{r-1}$ also holds for $a_1,\dots,a_{r-1}$. Now we claim that
\begin{equation}\label{eq:bound_difference}
\prod_{i\in \supp(a_r)}(n_i-1)\prod_{i\in [m]\setminus (\supp(a_r) \cup \{\ell_1,\dots,\ell_{r-1}\})}n_i \geq \prod_{i=1}^{d-1}(n_i-1) \prod_{i=d+r}^m n_i (n_{d+r-1}-1).
\end{equation}
The number of factors on both sides is $m-r+1$. In particular, we have $d$ factors of the form $n_i-1$, and $m-d-r+1$ of the form $n_i$. 
Let $t$ be the position, in $a_r$, of the first nonzero entry 
after the first zero entry. This would not make sense only for the 
greatest element of $S_d$ but we are in the case where $r \geq 2$. 
Let $\tilde{a}_r$ be the $m$-tuple whose entries coincide with $a_r$, 
except in the $t$-th position, where now we have a zero, and in the 
$(t-1)$-th position, where now we have a 1. 
Note that $\tilde{a}_r > a_r$.
We write $\tilde{\ell}_1, \ldots, \tilde{\ell}_{r-1}$ to 
denote the smallest possible indices which are not in the support
of $\tilde{a}_r$. We will show that
\begin{equation*} 
\begin{split}
\prod_{i\in \supp(a_r)} & (n_i-1)  \prod_{i\in [m]\setminus (\supp(a_r) \cup \{\ell_1,\dots,\ell_{r-1}\})}n_i \\ & = \left(
\prod_{i\in \supp(a_r)\setminus\{t\}}(n_i-1) \right)  (n_t - 1) \prod_{i\in [m]\setminus (\supp(a_r) \cup \{\ell_1,\dots,\ell_{r-1}\})}  n_i \\
& \geq  \left(
\prod_{i\in \supp(a_r)\setminus\{t\}}(n_i-1) \right)  (n_{t-1} - 1)
\prod_{i\in [m]\setminus (\supp(\tilde{a}_r) \cup \{\tilde{\ell}_1,\dots,\tilde{\ell}_{r-1}\})}  n_i 
\end{split}
\end{equation*}
Let $L = \supp(a_r) \cup  \{\ell_1,\dots,\ell_{r-1}\}$ and
$\tilde{L} = \supp(\tilde{a}_r) \cup \{\tilde{\ell}_1,\dots,\tilde{\ell}_{r-1}\}$.
We know that $t \in L$, $t - 1 \in \tilde{L}$ and
$t-1 \in L$ if and only if $t \in \tilde{L}$. In the case where 
$t$ and $t-1$ both belong to $L$ and $\tilde{L}$, the inequality 
is clear because $L = \tilde{L}$. In case $t-1 \notin L$ then 
we have a factor in 
\[
\prod_{i\in [m]\setminus (\supp(a_r) \cup \{\ell_1,\dots,\ell_{r-1}\})}  n_i 
\]
which is equal to $n_{t - 1}$, and replacing this factor by $n_t$ we
get
\[
\prod_{i\in [m]\setminus (\supp(\tilde{a}_r) \cup \{\tilde{\ell}_1,\dots,\tilde{\ell}_{r-1}\})}  n_i .
\]
Thus the inequality holds because 
$(n_t - 1) n_{t - 1} \geq  (n_{t-1} - 1) n_t$. By iterating this procedure, we move the first indices in $\supp(a_r)$ to $\{1,\dots,d-2\}$, as long as the resulting tuple satisfies $\tilde{a}_r\leq L_r$, where $L_r=x_1\cdots x_{d-1}x_{d+r-1}$ is the $r$th monomial with $\deg(L_r)=d$ in descending lexicographic order. Then $\tilde{a}_r$, seen as a monomial, is of the form $x_1\cdots x_{d-2}x_i x_j$, with $i<j$. If $i=d-1$, then $j\geq d+r-1$ because $a_1>\cdots >a_r$, which implies $L_r\geq a_r$, and therefore $\supp(a_r)$ contains at least two indices larger than $d-2$. In that case, we directly obtain inequality \eqref{eq:bound_difference}. If $i\geq d$ and $j\leq d+r-1$, then, taking into account \eqref{eq:condition_sizes_cartesian},
we have $(n_{d-1}-1)n_in_j(n_{d+r-1}-1)\leq n_{d-1}(n_i-1)(n_j-1)n_{d+r-1}$, which gives \eqref{eq:bound_difference}. Note that, for $j=d+r-1$, the inequality is always true and we do not need to have that case as an extra hypothesis. 
We finish the proof by noting that adding (\ref{eq:induction_hypothesis}) and (\ref{eq:bound_difference}) we get the stated bound. 
\end{proof}

\begin{remark}\label{r:condition}
Condition \eqref{eq:condition_sizes_cartesian} does not apply when $r\leq 2$. This condition implies a moderate growth for the sizes of the sets $A_i$, $d-1\leq i \leq d+r-1$. In particular, if the sizes of those sets are equal, the condition is satisfied. 
\end{remark}

\begin{corollary}\label{c:ghws_homogeneous_d}
Let $1\leq d \leq m$ and $1\leq r \leq m+1-d$. Let $\X=A_1\times \cdots \times A_m$ be a Cartesian set such that $0\in A_i$, $1\leq i \leq m$, and assume that
$$
\frac{n_{d+r-1}n_{d-1}}{n_in_j}\leq \frac{n_{d+r-1}+n_{d-1}-1}{n_i+n_j-1},
$$
for every $d\leq i<j\leq d+r-2$. Then 
$$
%d_r(\cdxx)=\prod_{i=1}^d{(n_i-1)}\prod_{i=d+1}^m n_i.
d_r(\cdx)=\prod_{i=1}^{d-1}(n_i-1)\prod_{i=d+r}^mn_i\left(\prod_{i=d}^{d+r-1}n_i-1\right).
$$
\end{corollary}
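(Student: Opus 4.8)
The plan is to combine the sharpness of the footprint bound from Theorem~\ref{t:fpsharp_homogeneous_d} with the lower bound established in Lemma~\ref{l:bound_for_fp}. Since $0\in A_i$ for every $1\leq i\leq m$, Theorem~\ref{t:fpsharp_homogeneous_d} applies and gives
$$
d_r(\cdx)=\min\left\{\abs{\nabla_\X(N)}:N\in\mathcal{N}_{r,d}\right\}.
$$
Under the standing hypothesis \eqref{eq:condition_sizes_cartesian}, Lemma~\ref{l:bound_for_fp} shows that every $N\in\mathcal{N}_{r,d}$ satisfies $\abs{\nabla_\X(N)}\geq \prod_{i=1}^{d-1}(n_i-1)\prod_{i=d+r}^mn_i\left(\prod_{i=d}^{d+r-1}n_i-1\right)$, and hence $d_r(\cdx)$ is bounded below by this same quantity. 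It therefore remains only to exhibit a single set of $r$ monomials whose shadow attains this value.

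For the matching upper bound I would take the set $M$ consisting of the $r$ largest square-free monomials of degree $d$ in descending degree-lexicographic order, namely
$$
M=\left\{x_1\cdots x_{d-1}\,x_{d+j}:0\leq j\leq r-1\right\}\subset\mathbb{M}_d^S,
$$
which is well defined because the hypothesis $r\leq m+1-d$ guarantees $d+r-1\leq m$. Recall that $\Delta(I(\X))=B_\X$ and that $\nabla_\X(M)$ consists of those tuples $\beta\in B_\X$ divisible by at least one monomial of $M$. A tuple $\beta=(\beta_1,\dots,\beta_m)$ lies in $\nabla_\X(M)$ precisely when $\beta_i\geq 1$ for all $1\leq i\leq d-1$ and $\beta_{d+j}\geq 1$ for at least one index $0\leq j\leq r-1$, while the coordinates in positions $d+r,\dots,m$ are unconstrained. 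Counting the choices coordinate by coordinate then gives
$$
\abs{\nabla_\X(M)}=\prod_{i=1}^{d-1}(n_i-1)\left(\prod_{i=d}^{d+r-1}n_i-1\right)\prod_{i=d+r}^mn_i,
$$
where the middle factor subtracts the single all-zero assignment on the block of indices $d,\dots,d+r-1$. This is exactly the claimed value.

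Combining the two bounds then closes the argument: the minimum defining $d_r(\cdx)$ is at most $\abs{\nabla_\X(M)}$ by the explicit count, and at least the same quantity by the lemma. I expect no genuine obstacle here, since all the real difficulty has already been absorbed into Lemma~\ref{l:bound_for_fp}, where condition \eqref{eq:condition_sizes_cartesian} is what rules out the competing configurations. The only point requiring a little care is to confirm that the candidate $M$ is indeed the set of $r$ largest degree-$d$ square-free monomials in the chosen order, so that it coincides with the extremal configuration (ending at $L_r=x_1\cdots x_{d-1}x_{d+r-1}$) identified in the proof of the lemma; granting that, the coordinatewise count above is routine.
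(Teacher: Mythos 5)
Your proposal is correct and follows essentially the same route as the paper: sharpness of the footprint bound from Theorem~\ref{t:fpsharp_homogeneous_d}, the lower bound from Lemma~\ref{l:bound_for_fp}, and the same extremal set $\{x_1\cdots x_{d-1}x_{d+j}:0\leq j\leq r-1\}$ for the upper bound. The only (immaterial) difference is that you count the shadow directly by complementary counting on the block of coordinates $d,\dots,d+r-1$, whereas the paper runs an inclusion--exclusion argument that telescopes to the same value $\prod_{j=d}^{d+r-1}n_j-1$.
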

\begin{proof}
By Theorem \ref{t:fpsharp_homogeneous_d} and Corollary \ref{l:bound_for_fp}, it is enough to find $M\subset \mathbb{M}_d^S$ achieving the stated bound. Consider 
$$
S_d(r)=\{x_1\cdots x_{d-1}x_d,x_1\cdots x_{d-1}x_{d+1},\dots, x_1\dots x_{d-1}x_{m-r}\}.
$$
We use the inclusion-exclusion principle to compute $\abs{\nabla_\X(S_d(r))}$. We have that
$$
\abs{\nabla_{\X}(x_1\cdots x_{d-1}x_{d+i-1})}=(n_{d+i-1}-1)\prod_{j=1}^{d-1}(n_j-1) \prod_{j=d, j\neq d+i-1}^m n_j.
$$
Similarly, we obtain that the number multiples of both $x_1\cdots x_{d-1}x_{d+i-1}$ and $x_1\cdots x_{d-1}x_{d+\ell-1}$ in $\Delta(I(\X))$ is
$$
%\abs{\nabla_{\X}(x^{\alpha_i},x^{\alpha_\ell})}=
(n_{d+i-1}-1)(n_{d+\ell-1}-1)\prod_{j=1}^{d-1}(n_j-1) \prod_{j=d, j\not \in \{d+i-1,d+\ell-1\}}^m n_j.
$$
In general, using the inclusion-exclusion principle, we get
$$
\begin{aligned}
\abs{\nabla_{\X}(S_d(r))}=\prod_{j=1}^{d-1}(n_j-1)\sum_{A\subset \{d,\dots,d+r-1\},A\neq \emptyset}(-1)^{\abs{A}-1} \prod_{j\in A}(n_j-1) \prod_{j=d, j\not \in A}^m n_j \\
=\prod_{j=1}^{d-1}(n_j-1) \prod_{j=d+r}^m n_j\sum_{A\subset \{d,\dots,d+r-1\},A\neq \emptyset}(-1)^{\abs{A}-1} \prod_{j\in A}(n_j-1) \prod_{j=d, j\not \in A}^{d+r-1} n_j.
\end{aligned}
$$
The fact that 
$$
\sum_{A\subset \{d,\dots,d+r-1\},A\neq \emptyset}(-1)^{\abs{A}-1} \prod_{j\in A}(n_j-1) \prod_{j=d, j\not \in A}^{d+r-1} n_j= \prod_{j=d}^{d+r-1}n_j-1
$$
follows by expanding the product $\prod_{j=d}^{d+r-1}(n_j-(n_j-1))=1$. 

\end{proof}

\begin{corollary}\label{c:ghws_leq_d}
Let $1\leq d \leq m$ and $1\leq r \leq m+1-d$. Assume that
$$
\frac{n_{d+r-1}n_{d-1}}{n_in_j}\leq \frac{n_{d+r-1}+n_{d-1}-1}{n_i+n_j-1},
$$
for every $d\leq i<j\leq d+r-2$. Then
$$
%d_r(\cdxx)=\prod_{i=1}^d{(n_i-1)}\prod_{i=d+1}^m n_i.
d_r(\cdxx)=\prod_{i=1}^{d-1}(n_i-1)\prod_{i=d+r}^mn_i\left(\prod_{i=d}^{d+r-1}n_i-1\right).
$$
\end{corollary}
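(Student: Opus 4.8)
The plan is to invoke Theorem~\ref{t:fpsharp_leq_d}, which reduces the computation of $d_r(\cdxx)$ to the purely combinatorial problem of minimizing the shadow size $\abs{\nabla_\X(N)}$ over all $N\in\mathcal{N}_{r,\leq d}$. As in Corollary~\ref{c:ghws_homogeneous_d}, I would establish the equality by combining a lower bound valid for every $N\in\mathcal{N}_{r,\leq d}$ with the existence of a set attaining it. The upper bound is immediate: the set $S_d(r)$ from the proof of Corollary~\ref{c:ghws_homogeneous_d} consists of $r$ distinct square-free monomials of degree exactly $d$, hence lies in $\mathcal{N}_{r,d}\subseteq\mathcal{N}_{r,\leq d}$, and the inclusion--exclusion computation carried out there shows that $\abs{\nabla_\X(S_d(r))}$ equals the claimed value. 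That computation uses only the ideal $I(\X)$ and divisibility, so it is unaffected by dropping the hypothesis $0\in A_i$, and it therefore transfers verbatim to the present setting.

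The real content is the lower bound, since Lemma~\ref{l:bound_for_fp} treats only monomials of degree exactly $d$, whereas $\mathcal{N}_{r,\leq d}$ permits lower degrees. My strategy is to reduce the inhomogeneous case to the homogeneous one by completion. Given $N=\{x^{\alpha_1},\dots,x^{\alpha_r}\}\in\mathcal{N}_{r,\leq d}$, I would produce distinct square-free monomials $x^{\beta_1},\dots,x^{\beta_r}$ of degree exactly $d$ with $x^{\alpha_i}\mid x^{\beta_i}$ for each $i$. Since every multiple of $x^{\beta_i}$ lying in $\Delta(I(\X))$ is a multiple of $x^{\alpha_i}$, taking the union over $i$ gives $\nabla_\X(\{x^{\beta_i}\})\subseteq\nabla_\X(N)$, whence $\abs{\nabla_\X(N)}\geq\abs{\nabla_\X(\{x^{\beta_i}\})}$. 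As $\{x^{\beta_i}\}\in\mathcal{N}_{r,d}$, Lemma~\ref{l:bound_for_fp}, whose growth hypothesis is exactly the one assumed here, bounds the latter below by the stated formula, which then completes the lower bound over all of $\mathcal{N}_{r,\leq d}$.

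The step I expect to be the main obstacle is the completion itself, that is, showing the $\beta_i$ can be chosen distinct. I would argue by induction on the degree deficit $\sum_i\bigl(d-\deg x^{\alpha_i}\bigr)$. If some $x^{\alpha_i}$ has degree $e_i<d$, there are $m-e_i\geq m+1-d\geq r$ variables $x_j$ with $j\notin\supp(\alpha_i)$, and for at most $r-1$ choices of $j$ the product $x^{\alpha_i}x_j$ coincides with one of the remaining $r-1$ monomials; since $(m+1-d)-(r-1)\geq 1$, a non-colliding variable always exists. Replacing $x^{\alpha_i}$ by $x^{\alpha_i}x_j$ keeps the $r$ monomials distinct, preserves divisibility from the original monomial, and strictly decreases the deficit, so the process terminates with the required homogeneous set. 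This is precisely where the hypothesis $r\leq m+1-d$ is indispensable. Combining the lower bound with the achievability of $S_d(r)$ and Theorem~\ref{t:fpsharp_leq_d} then yields the claimed value of $d_r(\cdxx)$.
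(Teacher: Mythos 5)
Your proof is correct, and its skeleton --- Theorem~\ref{t:fpsharp_leq_d} combined with a uniform lower bound over $\mathcal{N}_{r,\leq d}$ and the achieving set $S_d(r)$ (whose inclusion--exclusion shadow count indeed uses nothing about $0\in A_i$) --- matches the paper. Where you diverge is in how the lower bound handles sets containing a monomial of degree strictly less than $d$. The paper's argument is cruder and shorter: if some $a_j\in M$ has $\abs{a_j}=e<d$, then already $\abs{\nabla_\X(M)}\geq\abs{\nabla_\X(x^{a_j})}\geq\prod_{i=1}^{e}(n_i-1)\prod_{i=e+1}^m n_i$, and an elementary factoring of common terms shows that this single-monomial quantity alone already exceeds the claimed value of $d_r$, so the other $r-1$ monomials are never examined. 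You instead complete the whole set to $r$ distinct square-free monomials of degree exactly $d$, note that the shadow can only shrink under completion, and apply Lemma~\ref{l:bound_for_fp} to the completed homogeneous set; your deficit-induction argument for keeping the completed monomials distinct is valid, and it is indeed precisely where $r\leq m+1-d$ enters (each of the $r-1$ other monomials forbids at most one of the $\geq m+1-d\geq r$ available variables). Both routes are sound. Yours is the more systematic reduction --- it makes transparent that the inhomogeneous problem genuinely collapses to the homogeneous one --- at the cost of the extra completion step; the paper's one-monomial estimate avoids that combinatorics entirely but proves something slightly different, namely that any set containing a low-degree monomial is automatically far from optimal.
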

\begin{proof}
By Theorem \ref{t:fpsharp_leq_d}, if we prove that, for any $M\subset \mathbb{M}^S_{\leq d}$ with $\abs{M}=r$, we have that
$$
\abs{\nabla_{\X}(M)}\geq \prod_{i=1}^{d-1}(n_i-1)\prod_{i=d+r}^mn_i\left(\prod_{i=d}^{d+r-1}n_i-1\right),
$$
then we obtain the result by considering the same set of monomials as in Corollary \ref{c:ghws_homogeneous_d}. Let $M=\{a_1,\dots,a_r\}$. If $\abs{a_j}<d$, for some $1\leq j \leq r$, then 
$$
\abs{\nabla_\X(M)}\geq \abs{\nabla_\X(x^{a_j})}\geq \prod_{i=1}^{\abs{a_j}}(n_i-1)\prod_{i=\abs{a_j}+1}^m n_i\geq \prod_{i=1}^{d-1}(n_i-1)\prod_{i=d+r}^mn_i\left(\prod_{i=d}^{d+r-1}n_i-1\right).
$$
Indeed, the bound for $\abs{\nabla_\X(x^{a_j})}$ follows from Lemma \ref{l:bound_for_fp}. For the last inequality, factoring out the common terms, we have that the inequality holds if and only if
$$
\prod_{i=\abs{a_j}+1}^{d+r-1}n_i \geq \prod_{i=\abs{a_j}+1}^{d-1}(n_i-1)\left(\prod_{i=d}^{d+r-1}n_i-1\right),
$$
which holds because $\prod_{i=\abs{a_j}+1}^{d-1}n_i> \prod_{i=\abs{a_j}+1}^{d-1}(n_i-1)$ and $\prod_{i=d}^{d+r-1}n_i > \left(\prod_{i=d}^{d+r-1}n_i-1\right)$.
\end{proof}

Similar results to Corollary \ref{c:ghws_homogeneous_d} have been obtained for toric codes in \cite{jaramillo, GHWsToricSquarefree}. In that case, $\X=(\fq^*)^{m}$, and in principle Theorem \ref{t:fpsharp_homogeneous_d} does not apply with equality. However, it still applies as a lower bound, and it turns out to be sharp for some of the cases considered in \cite{jaramillo, GHWsToricSquarefree}, where this is proven by considering appropriate families of polynomials achieving the bound. Our approach of first proving that the footprint bound is sharp and then computing the set of monomials by giving a minimal bound does not generally work for that case, as we show in the next example, where the footprint bound is not sharp.

\begin{example}
Consider $q=3$, $m=5$, $d=2$, $r=3$, and $\X=(\F_q^*)^m$. Using the package from \cite{sanjoseGHWsPackage,githubGHWs}, we see that $d_3(\cdx)=16$. On the other hand, one can check that $\abs{\nabla(S_2(3))}=\abs{\nabla(x_1x_2,x_1x_3,x_1x_4)}=14$. Thus, the footprint bound is not sharp in this case. This fact was observed for the minimum distance when $d>m/2$ in \cite{jaramillo}.
\end{example}

\section{Codes over the projective space}
\label{projective}
In this section, we see how some of the results corresponding to the codes $\cdx$ can be translated to evaluation codes over the projective space. Consider $\PP^m(\fq)$, the rational points of the $m$-dimensional projective space, and fix a set of representatives, which we denote by $P^m$. Then we can define 
$$
\cdp(m):=\langle \ev_{P^m}(x^\alpha): \alpha \in S_d(m+1) \rangle .
$$
A different choice of representatives $P^m$ would lead to a monomially equivalent code. The length of the code is $\abs{P^m}=\frac{q^{m+1}-1}{q-1}$, and the dimension is $\binom{m+1}{d}$. If we define $\A^m_0=\mathbb{A}^m(\fq)\setminus \{0\}$, then we can consider
$$
\cda{m}:=\langle \ev_{\A^m_0}(x^\alpha): \alpha \in S_d(m) \rangle.
$$
Let $\xi\in\fq$ be a primitive element. We denote $\xi \cdot P^m:=\{\xi Q:Q\in P^m\}$. Then we have (for example, see \cite{sanjoseRecursivePRM})
\begin{equation}\label{eq:decomposition_affine_space}
\mathbb{A}^{m+1}(\fq)=P^m\cup \xi \cdot P^m\cup \xi^2\cdot P^m\cup \cdots \cup \xi^{q-2}\cdot P^m\cup \{(0,\dots,0)\}.
\end{equation}
In what follows, we will assume that we have fixed an ordering for the points of $P^m$, which also fixes an ordering for the points of $\xi^i\cdot P^m$, and we will assume we order the points of $\mathbb{A}^{m+1}(\fq)$ as in (\ref{eq:decomposition_affine_space}). The relation between $\cdp(m)$ and $\cda{m}$ is similar to that between homogeneous Reed-Muller codes and projective Reed-Muller codes, e.g., see \cite{bergerAutomorphismsPRM,duursmaHRM}.

\begin{lemma}\label{l:equivalenceAffineProjective}
Let $1\leq d \leq m+1$, and $\xi\in \fq$ a primitive element. Then
$$
\cda{m+1}=(1,\xi^d,\xi^{2d},\dots,\xi^{d(q-2)})\otimes \cdp(m).
$$
As a consequence, $\dim(\cda{m+1})=\dim (\cdp(m))$, and, for every $1\leq r \leq \binom{m+1}{d}$, we have $d_r(\cda{m+1})=(q-1)d_r(\cdp(m))$.

\end{lemma}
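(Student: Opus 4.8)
The plan is to unwind the definitions and exploit the homogeneity of square-free monomials of a fixed degree $d$. Write $v=(1,\xi^d,\xi^{2d},\dots,\xi^{d(q-2)})$, and recall from \eqref{eq:decomposition_affine_space} that $\A^{m+1}_0=\mathbb{A}^{m+1}(\fq)\setminus\{0\}$ decomposes as the disjoint union $P^m\cup \xi\cdot P^m\cup\cdots\cup\xi^{q-2}\cdot P^m$, with the points of $\A^{m+1}_0$ ordered block by block. Under this ordering, a generator $v\otimes b$ of the tensor product is exactly the concatenation $(b,\xi^d b,\dots,\xi^{d(q-2)}b)$, and the total length $(q-1)\abs{P^m}=q^{m+1}-1$ matches $\abs{\A^{m+1}_0}$.

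First I would establish the equality of codes by matching generators. Fix $\alpha\in S_d(m+1)$; since $x^\alpha$ is square-free of degree $d$, it is homogeneous of degree $d$, so for any $Q\in P^m$ and any $0\le i\le q-2$ we have $x^\alpha(\xi^i Q)=\xi^{di}\,x^\alpha(Q)$. Reading off the evaluation block by block gives
$$
\ev_{\A^{m+1}_0}(x^\alpha)=\bigl(\ev_{P^m}(x^\alpha),\,\xi^{d}\ev_{P^m}(x^\alpha),\,\dots,\,\xi^{d(q-2)}\ev_{P^m}(x^\alpha)\bigr)=v\otimes \ev_{P^m}(x^\alpha).
$$
The crucial point, and what makes the tensor structure uniform across all generators, is that the per-block scalar $\xi^{di}$ depends only on the common degree $d$ and not on the individual monomial; this is precisely why the argument is confined to the homogeneous code $\cdx$ rather than $\cdxx$. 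Since the monomials $x^\alpha$, $\alpha\in S_d(m+1)$, generate $\cda{m+1}$ and their projections generate $\cdp(m)$, taking spans yields $\cda{m+1}=v\otimes\cdp(m)$.

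For the dimension and weight statements I would use that the projection $\pi\colon\cda{m+1}\to\cdp(m)$ onto the first block (the coordinates indexed by $P^m$) is an isomorphism: every codeword of $\cda{m+1}$ has the form $v\otimes c$ with $c\in\cdp(m)$, and $\pi(v\otimes c)=c$ since the leading entry of $v$ is $1$, so $\pi$ is a bijection with inverse $c\mapsto v\otimes c$. This gives $\dim(\cda{m+1})=\dim(\cdp(m))$ at once. For the GHWs I would translate supports through $\pi$. For any $c$ and any block $i$, the vector $\xi^{di}c$ has the same support as $c$ because $\xi^{di}\ne 0$, so $\supp(v\otimes c)$ is made of $q-1$ copies of $\supp(c)$ sitting in disjoint coordinate ranges. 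More generally, an $r$-dimensional subcode $D\subseteq\cda{m+1}$ corresponds under $\pi$ to an $r$-dimensional subcode $C'=\pi(D)\subseteq\cdp(m)$ with $D=\{v\otimes c:c\in C'\}$, and in each of the $q-1$ blocks the support of $D$ equals $\bigcup_{c\in C'}\supp(c)=\supp(C')$; hence $\abs{\supp(D)}=(q-1)\abs{\supp(C')}$.

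Since $\pi$ induces a bijection between $r$-dimensional subcodes of the two codes, minimizing $\abs{\supp(\cdot)}$ over $r$-dimensional subcodes on both sides yields $d_r(\cda{m+1})=(q-1)\,d_r(\cdp(m))$. I expect no serious obstacle here, as the whole argument is essentially bookkeeping; the only points requiring genuine care are keeping the block-ordering convention of \eqref{eq:decomposition_affine_space} consistent so that the concatenation truly matches $v\otimes(\cdot)$, and confirming that the nonzero scalars $\xi^{di}$ leave the per-block support counts unchanged.
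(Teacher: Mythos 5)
Your proposal is correct and follows essentially the same route as the paper: decompose $\A^{m+1}_0$ into the blocks $\xi^i\cdot P^m$, use homogeneity of degree-$d$ monomials to get $\ev_{\A^{m+1}_0}(x^\alpha)=(1,\xi^d,\dots,\xi^{d(q-2)})\otimes\ev_{P^m}(x^\alpha)$, and conclude by linearity. The paper leaves the dimension and GHW consequences implicit, whereas you spell out the block-support bookkeeping, but the underlying argument is identical.
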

\begin{proof}
Let $f\in \fq[x_0,\dots,x_m]_d$ be a homogeneous polynomial of degree $d$. Then, taking (\ref{eq:decomposition_affine_space}) into account, we have
$$
\begin{aligned}
\ev_{\A^m_0}(f)&=(\ev_{P^m}(f),\ev_{\xi \cdot P^m}(f),\ev_{\xi^2 \cdot P^m}(f),\dots,\ev_{\xi^{(q-2)} \cdot P^m}(f))\\
&=(\ev_{P^m}(f),\xi^d\ev_{P^m}(f),\xi^{2d}\ev_{ P^m}(f),\dots,\xi^{d(q-2)}\ev_{P^m}(f))\\
&=(1,\xi^d,\xi^{2d},\dots,\xi^{d(q-2)})\otimes \ev_{P^m}(f).
\end{aligned}
$$
\end{proof}

\begin{remark}
For $d=1$, the codes $\cdp(m)$ are non-degenerate. But for $1<d\leq m+1$, these codes are degenerate, since all the polynomials of $S_d(m+1)$ vanish at any point with $m+2-d$ zero entries, or, equivalently, at every point with Hamming weight lower than or equal to $d-1$. After removing these coordinates, the length of $\cdp(m)$ would be
$$
\frac{q^{m+1}-1}{q-1}-\sum_{i=1}^{d-1} \binom{m+1}{i}.
$$
\end{remark}

From the connection given in Lemma \ref{l:equivalenceAffineProjective}, we can directly obtain the GHWs of the codes $\cda{m}$ and $\cdp(m)$ for some cases.

\begin{corollary}
Let $1\leq d \leq m$ and $1\leq r \leq m+1-d$. We have that 
$$
% \wt(\cda{m})=(q-1)^dq^{m-d}.
d_r(\cda{m})=(q-1)^{d-1} q^{m-d-r+1}(q^{r}-1).
$$
As a consequence, for $1\leq d \leq m+1$ and $1\leq r\leq m+2-d$, we have 
$$
% \wt(\cdp(m))=(q-1)^{d-1}q^{m+1-d}.
d_r(\cdp(m))=(q-1)^{d-2} q^{m-d-r+2}(q^{r}-1).
$$
\end{corollary}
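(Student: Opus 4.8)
The plan is to reduce the affine formula to Corollary \ref{c:ghws_homogeneous_d} and then transfer to the projective setting via Lemma \ref{l:equivalenceAffineProjective}. First I would note that $\A^m_0 = \mathbb{A}^m(\fq)\setminus\{0\}$ is not itself a Cartesian set, but differs from the full Cartesian set $\X = \fq^m$ (so $n_1 = \cdots = n_m = q$) only by the origin. Since every square-free monomial of degree $d \geq 1$ vanishes at $\mathbf{0}$, the coordinate of $\cdx$ indexed by the origin is identically zero on all of $\cdx$. Deleting an always-zero coordinate changes neither the dimension of any subcode nor any support, so the generalized Hamming weights are preserved, giving $d_r(\cda{m}) = d_r(\cdx)$ for $\X = \fq^m$.

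Next I would apply Corollary \ref{c:ghws_homogeneous_d} to this Cartesian set. The hypothesis $0 \in A_i$ holds because $A_i = \fq$, and with all $n_i = q$ the growth condition \eqref{eq:condition_sizes_cartesian} becomes $1 \leq 1$ (the equal-sizes case of Remark \ref{r:condition}), so it holds over the whole range $1 \leq d \leq m$, $1 \leq r \leq m+1-d$. Substituting $n_i = q$ into the formula of Corollary \ref{c:ghws_homogeneous_d} turns its three factors into $(q-1)^{d-1}$, $q^{m-d-r+1}$, and $q^r - 1$, which is exactly the claimed affine formula.

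For the projective statement, I would apply Lemma \ref{l:equivalenceAffineProjective} with $m$ replaced by $m+1$, giving $d_r(\cda{m+1}) = (q-1)\,d_r(\cdp(m))$. The affine formula just proved, evaluated at $m+1$ in place of $m$ (valid for $1 \leq d \leq m+1$ and $1 \leq r \leq m+2-d$), reads $d_r(\cda{m+1}) = (q-1)^{d-1} q^{m-d-r+2}(q^r-1)$; dividing by $q-1$ then yields the stated projective formula.

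The computations are routine, so the only real content is the first step: recognizing that $\cda{m}$ is the Cartesian square-free code over $\fq^m$ punctured at the single origin coordinate. The point to verify is that this coordinate is always zero precisely because $d \geq 1$ forces each monomial to involve a variable, which is what keeps the argument inside the stated range.
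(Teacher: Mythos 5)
Your proposal is correct and follows essentially the same route as the paper: identify $\cda{m}$ with the Cartesian square-free code over $\fq^m$ punctured at the always-zero origin coordinate, apply Corollary \ref{c:ghws_homogeneous_d} with all $n_i=q$ (where condition \eqref{eq:condition_sizes_cartesian} holds trivially), and transfer to $\cdp(m)$ via Lemma \ref{l:equivalenceAffineProjective} applied in $m+1$ variables. Your write-up is in fact slightly more careful than the paper's, which states the puncturing step with $\mathbb{A}^{m+1}$ where $\mathbb{A}^{m}$ is meant.
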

\begin{proof}
From Corollary \ref{c:ghws_homogeneous_d} and Remark \ref{r:condition} we obtain that 
$$
% \wt(\cda{m})=(q-1)^dq^{m-d}.
d_r(C_d^{\mathbb{A}^{m+1},S})=(q-1)^{d-1} q^{m-d-r+1}(q^{r}-1).
$$
Now we note that $d_r(C_d^{\mathbb{A}^{m+1}, S})=d_r(\cda{m})$, since all the homogeneous polynomials of degree $d$ vanish at the point $(0,\dots,0)$, implying that the puncturing at the corresponding coordinate preserves the support. The result for $\cdp(m)$ follows from Lemma \ref{l:equivalenceAffineProjective}.

\end{proof}

\section*{Declarations}
\subsection*{Conflict of interest} The authors declare no conflict of interest.

% \bibliographystyle{abbrv}
% \bibliography{BIBR}

\end{document}